\documentclass[conference,10pt]{IEEEtran}

\makeatletter
\def\ps@headings{%
\def\@oddhead{\mbox{}\scriptsize\rightmark \hfil \thepage}%
\def\@evenhead{\scriptsize\thepage \hfil\leftmark\mbox{}}%
\def\@oddfoot{}%
\def\@evenfoot{}}
\makeatother
\pagestyle{headings}

\usepackage{cite}
\usepackage{amssymb,amsmath}
\usepackage{algorithm}
\usepackage{algorithmic}
\usepackage{graphicx}
\usepackage[tight,footnotesize]{subfigure}
\usepackage{color}
\usepackage{url}

\newtheorem{definition}{Definition}
\newtheorem{proposition}{Proposition}
\newtheorem{proof}{Proof}

\begin{document}

\title{On Scheduling and Redundancy for P2P Backup}

\author{\IEEEauthorblockN{Laszlo Toka\IEEEauthorrefmark{1}\IEEEauthorrefmark{2}, Matteo Dell'Amico\IEEEauthorrefmark{1}, Pietro Michiardi\IEEEauthorrefmark{1}\vspace{3pt}\\\{laszlo.toka, matteo.dell-amico, pietro.michiardi\}@eurecom.fr\\}
\IEEEauthorblockA{\IEEEauthorrefmark{1} Eurecom, Sophia-Antipolis, France}
\IEEEauthorblockA{\IEEEauthorrefmark{2} Budapest University of Technology and Economics, Hungary}}

\maketitle

\begin{abstract}

An online backup system should be quick and reliable in both saving
and restoring users' data. To do so in a peer-to-peer implementation,
data transfer scheduling and the amount of redundancy must be chosen
wisely. We formalize the problem of exchanging multiple pieces of data
with intermittently available peers, and we show that random
scheduling completes transfers nearly optimally in terms of duration
as long as the system is sufficiently large. Moreover, we propose an
adaptive redundancy scheme that improves performance and decreases
resource usage while keeping the risks of data loss low.  Extensive
simulations show that our techniques are effective in a realistic
trace-driven scenario with heterogeneous bandwidth.

\end{abstract}

\section{Introduction}

The advent of cloud computing as a new paradigm to enable service providers with the ability to deploy cost-effective solutions has favored the development of a range of new services, including online storage applications. Due to the economy of scale of cloud-based storage services, the costs incurred by end-users to hand over their data to a remote storage location in the Internet have approached the cost of ownership of commodity storage devices.

As such, online storage applications spare users most of the
time-consuming nuisance of data backup: user interaction is minimal,
and in case of data loss due to an accident, restoring the original
data is a seamless operation.  However, the long-term storage costs that
are typical of a backup application may easily go past that of
traditional approaches to data backup. Additionally, while data
availability is a key feature that large-scale data-centers
deployments guarantee, its durability is questionable, as reported
recently\cite{carbonite}.

For these reasons, peer-to-peer (P2P) storage systems are an alternative to cloud-based solutions. Storage costs are merely those of a commodity storage device, which is shared (together with some bandwidth resources) with a number of remote Internet users to form a distributed storage system. Such applications optimize latency to individual file access: indeed, users hand over their data to the P2P system, which is used as a replacement of a local hard drive. In such a scenario, low access latency is difficult to achieve: the online behavior of users is unpredictable and, at large scale, crashes and failures are the norm rather than the exception. As a consequence, storage space is sacrificed for low access latency: a P2P application stores large amounts of redundant data to cope with such unfavorable events.

In this work we study a particular case of online storage: P2P backup
applications. Data backup involves the bulk transfer of potentially
large quantities of data, both during regular data backups and in case
of data loss. As a consequence, low access latency is not an issue,
while short backup and restore times seem a more reasonable goal.

Given these considerations, here we seek to optimize backup
and restore times, while guaranteeing that data loss remains an
unlikely event. There are two main design choices that affect these
metrics: \emph{scheduling}, i.e. deciding how to allocate data
transfers between peers, and \emph{redundancy}, i.e. the amount of
data in the P2P system that guarantees a backup operation to be
considered complete and safe. The endeavor of this work is to study
and evaluate these two intertwined aspects.

First, we describe in detail our application scenario (Sec.   \ref{sec:background}), and show why the assumptions underlying a backup application can simplify many problems addressed in the literature.
We then set off to define the problem of scheduling in a full knowledge setting, and we show that it can be solved in polynomial time by reducing it to a maximal flow problem. Full knowledge of future peer uptime is obviously an unrealistic assumption: thus, we show that a randomized approach to scheduling yields near optimal results when the system scale is large and we corroborate our findings using real availability traces from an instant messaging application (Sec. \ref{sec:scheduling}).

We then move to study a novel redundancy policy that, rather than focusing on short-term data availability, targets short data restore times. As such, our method alleviates the storage burden of large amounts of redundant data on client machines (Sec. \ref{sec:red}). With a trace-driven simulation of a complete P2P backup system, we show that our technique is viable in practical scenarios and illustrate its benefits in terms of increased performance (Sec. \ref{sec:complex_sim}).

We conclude by studying a range of data maintenance policies when
restore operations may undergo some natural delays. For example,
detecting a faulty external hard-drive may not be immediate, or
obtaining a new equipment upon a crash may require some time. We show
that an ``assisted'' approach to data repair techniques (which
involves a cloud-based storage service) can significantly reduce the
probability of data loss, at an affordable cost (Sec. \ref{sec:repair}).

\section{Application Scenario}
\label{sec:background}

In this work, similarly to many online backup applications (e.g.,
Dropbox\footnote{\url{https://www.dropbox.com/}}), we assume users to
specify one local folder containing important data to backup. We also
assume that backup data remains available locally to peers. This is an
important trait that distinguishes backup from storage applications,
in which data is only stored remotely.

Backup data consists of an opaque object, possibly representing an
encrypted archive of changes to a set of files, that we
term \emph{backup object}. In the spirit of incremental backups, we
consider that each backup object should be kept on the system
indefinitely. Consolidation and deletion of obsolete backups are not
taken into account in this work.

A backup object of size $o$ is split into $k$ original fragments of a
fixed size $f$, with $k=o / f$. Since backup data is stored on
unreliable machines characterized by an unpredictable online behavior,
the original $k$ blocks are encoded using erasure coding (e.g.,
Reed-Solomon). This creates $n$ \emph{encoded fragments} having size
$f$, of which any $k$ are sufficient to recover the original data. The
redundancy rate is defined as $r=n / k$. Here we assume that encoded
fragments reside on \emph{distinct} remote peers, which avoids that a
single disk failure causes the loss of multiple fragments.

\noindent \textbf{Backup Phase:} The backup phase involves a data
owner and a set of remote peers that eventually store encoded
fragments for the data owner. We assume that any peer in the system
can collect a list of remote peers with available storage space: this
can be achieved by using known techniques, e.g. a centralized
``tracker'' or a decentralized data structure such as a distributed
hash table.

Data backup requires a \emph{scheduling policy} that drives the choice
of where and when to upload encoded fragments to remote
peers. Moreover, a \emph{redundancy policy} determines when the data is safe, which completes the backup operation.

\noindent \textbf{Maintenance Phase:} Once the backup phase is
completed and encoded fragments reside on remote peers, the
maintenance phase begins. Peer crashes and departures can cause the
loss of some encoded fragments; during the maintenance phase, peers
detects such losses and generate new encoded fragments to restore a
redundancy level at which the backup is safe again.

For a generic P2P storage system, in which encoded fragments only
reside in the network and peers do not keep a local copy of their
data, the maintenance phase is critical. Indeed, peers need to first
download the whole backup object from remote machines, then to
generate new encoded fragments and upload them to available
peers. This problem has fostered the design of efficient coding
schemes to mitigate the excessive network traffic caused by the
maintenance operation (see
e.g.~\cite{dimakis07,duminuco-biersack-08}).

In a backup application, the maintenance phase is less critical: the
data owner can generate new encoded fragments using the local copy of
the data with no download required.

\noindent \textbf{Restore Phase: } In the unfortunate case of a
crash, the data owner initiates the restore phase. A peer contacts the
remote machines holding encoded fragments, downloads at least $k$ of
them, and reconstructs the original backup data. Again, a scheduling
policy drives the process.

Since the ability to successfully restore data upon a crash is the
ultimate goal of any backup system, in our application the restore
traffic receives higher priority than the backup and maintenance
traffic.

\subsection{Performance Metrics}

We characterize the system performance in terms of the amount of time
required to complete the backup and the restore phases,
labelled \emph{time to backup} (TTB) and \emph{time to restore}
(TTR). 

In the following Sections, we use baselines for backup and restore
operations which bound both TTB and TTR. Let us assume an ideal
storage system with unlimited capacity and uninterrupted online time
that backs up user data. In this case, TTB and TTR only depend on
backup object size and on bandwidth and availability of the data
owner. We label these ideal values \emph{minTTB} and \emph{minTTR},
and we define them formally in Sec.~\ref{sec:scheduling}.

Additionally, we consider the \emph{data loss probability}, which
accounts for the probability of a data owner to be unable to restore
backup data.

A P2P backup application may exact a high toll in terms of peer
resources, including storage and bandwidth. In this work we gloss over
metrics of the burden on individual peers and the network, considering
a scenario in which the resources of peers are lost if left unused.

\subsection{Availability Traces}
\label{sec:availability_trace}

The \emph{online behavior} of users, i.e., their patterns of
connection and disconnection over time, is difficult to capture
analytically. In this work we will perform our evaluations on a real
application trace that exhibits both heterogeneity and correlated user
behavior. Our traces capture user availability, in terms of
login/logoff events, from an instant messaging (IM) server in Italy
for a duration of 3 months. We argue that the behavior of regular IM
users constitutes a representative case study. Indeed, for both an IM
and an online backup application, users are generally signed in for as
long as their machine is connected to the Internet.

In this work we only consider users that are online for an average of
at least four hours per day, as done in the Wuala online storage
application\footnote{\url{http://www.wuala.com/}}. Once this filter is
applied, we obtain the trace of 376 users. User availabilities are
strongly correlated, in the sense that many users connect or
disconnect around the same time. As shown in Fig. \ref{fig:avail2},
there are strong differences between the number of users connected
during day and night and between workdays and weekends. Most users are
online for less than 40\% of the trace, while some of them are almost
always connected (Fig. \ref{fig:a}).

\begin{figure}
\centering
	\subfigure[Online peers during a week.]{\label{fig:avail2} \includegraphics[width=.23\textwidth]{./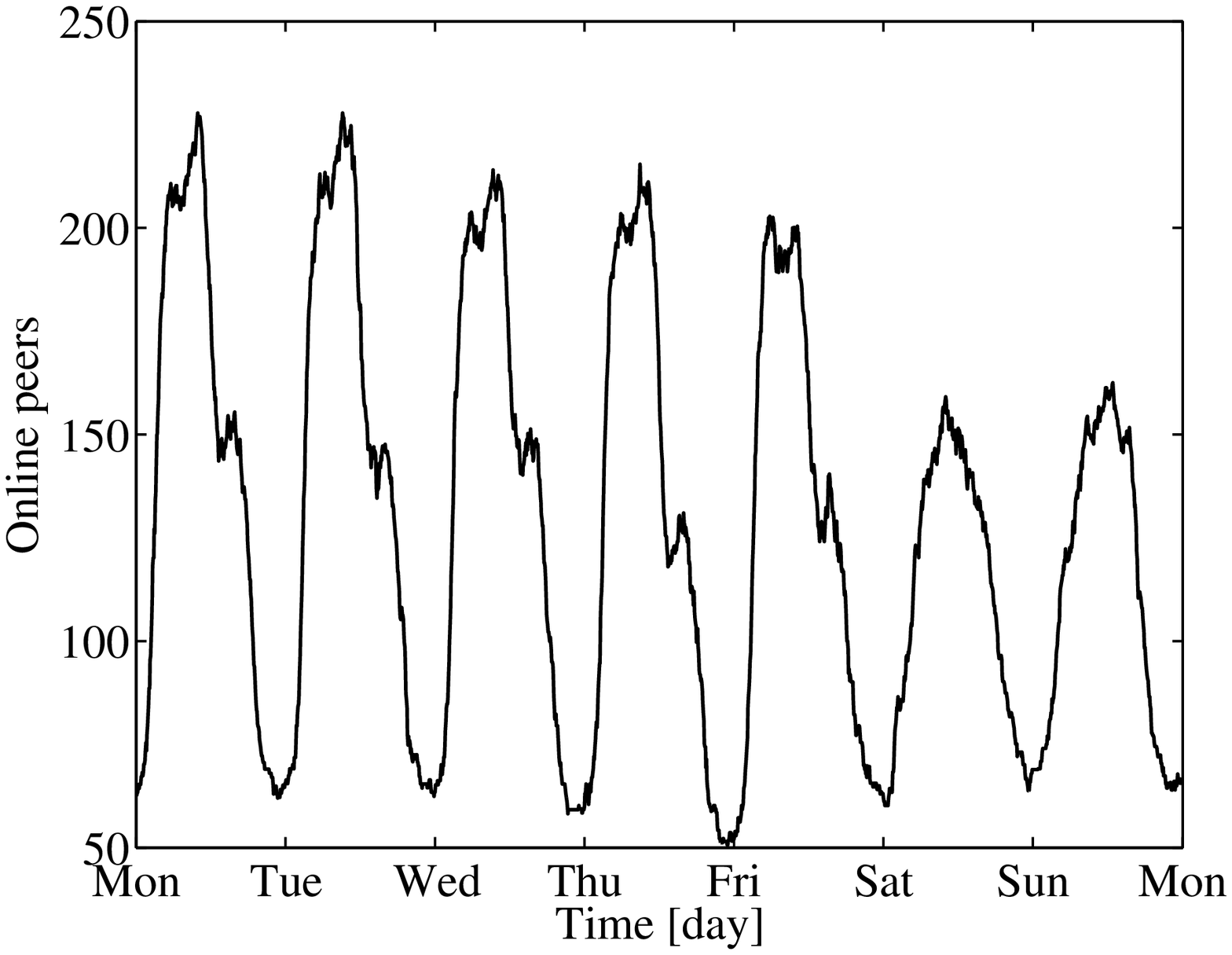}}
	\subfigure[Time spent online.]{\label{fig:a} \includegraphics[width=.23\textwidth]{./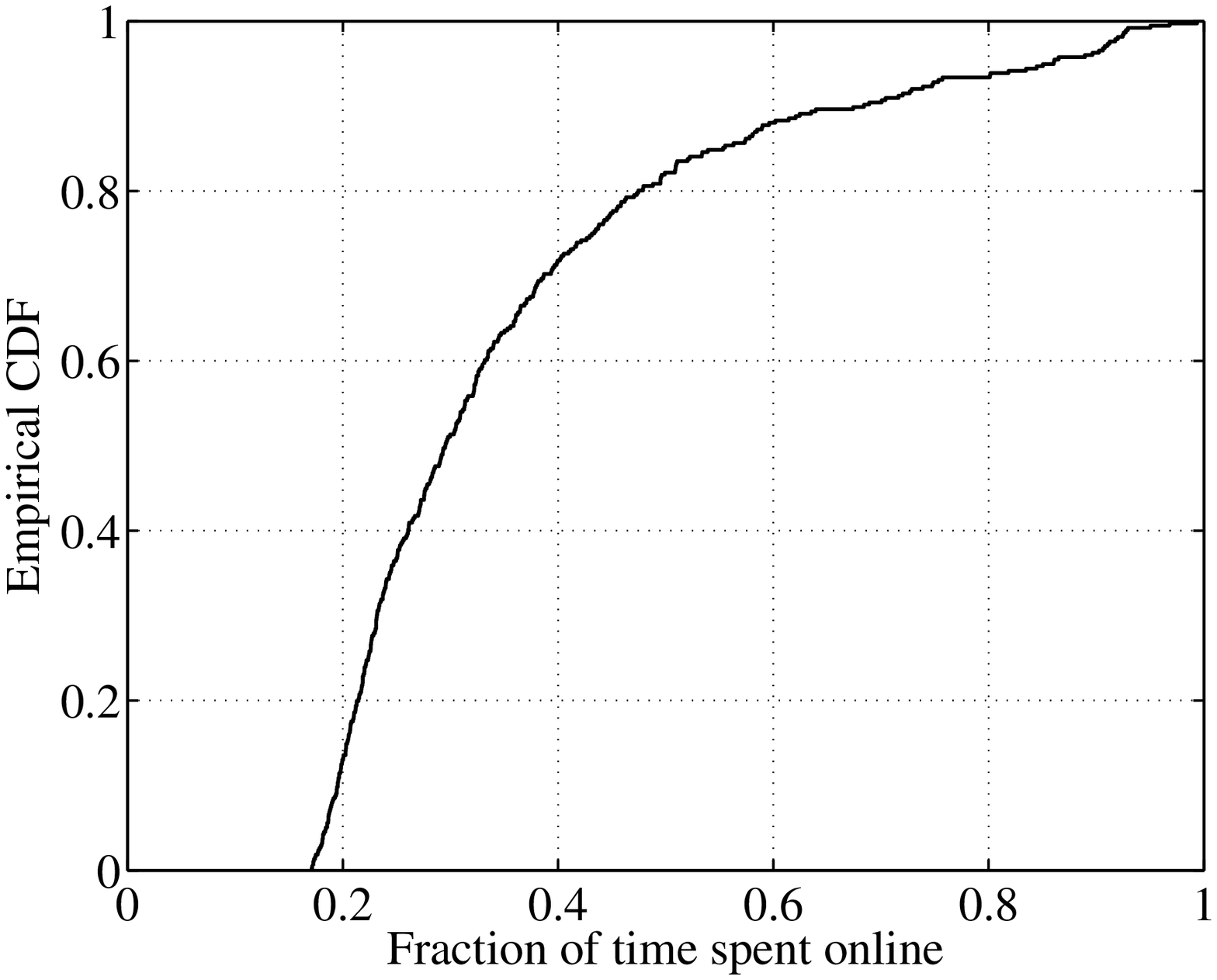}}	
\caption{Availability trace. Uptime is heterogeneous and strongly
  correlated.}
\label{fig:basics}
\end{figure}

\section{The Scheduling Problem}
\label{sec:scheduling}

Scheduling data transfers between peers is an important operation that
affects the time required to complete a backup or a restore task,
especially in a system involving unreliable machines with
unpredictable online patterns. Because of churn, a node might not be
able to find online nodes to exchange data with: hence, TTB and TTR
can grow due to idle periods of time. Unexpected node disconnections
require a method to handle partial fragments, which can be discarded
or resumed.  Moreover, the redundancy rate used to cope with failures
and unavailability may decrease system performance. Finally, the
available bandwidth between peers involved in a data transfer, which
may be shared due to parallel transmissions, is another cause for slow
backup and restore operations.

In this Section, we focus on the implications of churn alone. We
simplify the scheduling problem by assuming the redundancy factor to
be a given input parameter, and neglecting the possibility of
congestion due to several different backup, restore or maintenance
processes interfering. Furthermore, we do not consider interrupted
fragment transfers. In Section \ref{sec:red}, we define an adaptive
scheme to compute the redundancy rate applied to a backup operation
and in Section \ref{sec:complex_sim} we relax all other assumptions.

We now define a reference scenario to bound TTB and TTR. Consider an
ideal storage system (e.g. a cloud service) with unbounded bandwidth
and 100\% availability. A peer $i$ with upload and download bandwidth
$u_i$ and $d_i$ starting the backup of an object of size $o$ at time
$t$ completes its backup at time $t'$, after having spent $\frac o
{u_i}$ time online. Analogously, $i$ restores a backup object with the
same size at $t''$ after having spent $\frac o {d_i}$ time online. We
define $minTTB(i, t)=t' - t$ and $minTTR(i, t)=t'' - t$. We use these
reference values throughout the paper to compare the relative
performance of our P2P application versus that of such an ideal
system.

Because we neglect congestion issues, we can focus on a backup/restore
operation as seen from a single peer in the system. Let us consider a
generic peer $p_0$ and $I$ remote peers $p_1, \ldots, p_I$ used to
store $p_0$'s data. We assume time to be fractioned
in \emph{time-slots} of fixed length. Let $a_{i,t}$ be an indicator
variable so that $a_{i,t}=1$ if and only if $p_i$ is online at time
$t$. Each peer $i$ has integer upload and download capacity of
respectively $u_i$ and $d_i$ fragments per time-slot.

We now proceed with a series of definitions used to formalize the
scheduling problem.

\begin{definition}
	A \emph{backup schedule} is a set of $(i, t)$ tuples representing the decision of uploading a fragment from $p_0$ to peer $p_i$, where $i \in \{1 \ldots I\}$ at time-slot $t$. A valid backup schedule $S$ satisfies the following properties: 
	\begin{enumerate}
		\item $\forall t: \left|\left\{i : (i, t) \in S\right\}\right| \leq u_0$: no more than $u_0$ fragments per time-slot can be uploaded. 
		\item $\forall (i,t)\in S: a_{i,t}=a_{0,t}=1$: fragments are transferred only between online peers. 
		\item $\forall (i,t), (j, u) \in S: i \neq j$: no two fragments are stored on the same peer. 
	\end{enumerate}
	
	\label{def:backup_schedule} 
\end{definition}

\begin{definition}
	A \emph{restore schedule} is a set of $(i,t)$ tuples representing the decision of downloading a fragment from a set of remote peers $p_i \in P$ at time $t$, where $P$ is set of storage peers that received a fragment during the backup phase. A valid restore schedule $S$ satisfies the following properties: 
	\begin{enumerate}
		\item $\forall t: \left|\left\{i : (i, t) \in S\right\}\right| \leq d_0$: no more than $d_0$ fragments per time-slot can be recovered. 
		\item $\forall (i,t)\in S: a_{i,t}=a_{0,t}=1$. 
		\item $\forall (i,t): (j, u) \in S, i \neq j$. 
		\item $\forall (i,t)\in S: p_i \in P$: fragments can only be retrieved from storage peers. 
	\end{enumerate}
	
	\label{def:restore_schedule} 
\end{definition}

\begin{definition}
	
The \emph{completion time} $C$ of a schedule $S$ is the last time-slot
in which a transfer is performed, that is: $$C(S)= \max \{t:(i,t)\in
S\}.$$
\end{definition}

In the following, we first consider a full information setting, and
show how to compute an optimal schedule which minimizes completion
time provided that the online behavior of peers is known \emph{a
priori}. Then, we compare optimal scheduling to a randomized policy
that needs no knowledge of future peer uptime; via a numeric analysis,
we show the conditions under which a randomized, uninformed approach
achieves performance comparable to that of an optimal schedule.

\subsection{Full Information Setting}

We cast the problem of finding the optimal schedule for both backup
and restore operations as finding the minimum completion time to
transfer a given number $x$ of fragments. For backup, $x$ will
correspond to the number $n$ of redundant encoded fragments; for
restores, $x$ will be equal to the number $k$ of original
fragments. We show that this problem can be reduced to finding the
maximum number of fragments that can be transferred within a given
time $T$. We then use a max-flow formulation and show that existing
algorithms can solve the original problem in polynomial time.

\begin{definition}
	
	An \emph{optimal schedule} to backup/restore $x$ fragments is one that achieves the minimum completion time to transfer at least $x$ fragments. Let $\mathcal S$ be the set of all valid schedules; the minimum completion time is: 
	\begin{equation}
		O(x)=\min\{C(S):S\in \mathcal S\land|S|\geq x\}. \label{eq:mintime} 
	\end{equation}
\end{definition}

The following proposition shows that the optimal completion time can be obtained by computing the maximum number of fragments that can be transferred in $T$ time-slots. 

\begin{proposition}
	Let $\mathcal S$ be the set of all valid schedules and $F(t)$ be the function denoting the maximum number of fragments that can be transferred within time-slot $t$, that is: 
	\begin{equation}
		F(t)=\max\{|S| : S \in \mathcal S \land C(S) \leq t\}. \label{eq:maxfrag} 
	\end{equation}
	The optimal completion time is: $$O(x)=\min\{t : F(t) \geq x\}.$$ 
\end{proposition}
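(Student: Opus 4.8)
The plan is to show the two quantities $O(x)$ and $\min\{t : F(t) \geq x\}$ coincide by establishing inequalities in both directions, leveraging only the definitions of $O$, $F$, $C$, and the notion of a valid schedule. Throughout I will use the key structural observation that $F$ is monotone nondecreasing: if $t \leq t'$ then any valid schedule with $C(S) \leq t$ also satisfies $C(S) \leq t'$, so $F(t) \leq F(t')$. I will also note that both minima are well-defined: the set in the definition of $O(x)$ is nonempty whenever $x$ fragments can be transferred at all (which we assume, since otherwise backup/restore is impossible), and correspondingly $F(t) \geq x$ for some sufficiently large $t$.

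First I would prove $O(x) \leq \min\{t : F(t) \geq x\}$. Let $t^\star = \min\{t : F(t) \geq x\}$. By definition of $F(t^\star)$ as a maximum over valid schedules with completion time at most $t^\star$, there exists a valid schedule $S$ with $|S| = F(t^\star) \geq x$ and $C(S) \leq t^\star$. This $S$ is a feasible competitor in the minimization defining $O(x)$ (it is valid and has $|S| \geq x$), so $O(x) \leq C(S) \leq t^\star$.

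Next I would prove the reverse inequality $O(x) \geq t^\star$. By definition of $O(x)$ there is a valid schedule $S$ with $|S| \geq x$ achieving $C(S) = O(x)$. Setting $t = O(x)$, this $S$ witnesses that $F(O(x)) \geq |S| \geq x$. Hence $O(x)$ belongs to the set $\{t : F(t) \geq x\}$, and therefore $O(x) \geq \min\{t : F(t) \geq x\} = t^\star$. Combining the two inequalities gives $O(x) = t^\star$, as claimed.

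Honestly, this proposition is essentially a definitional unfolding — a standard ``binary-search duality'' between a min-completion-time problem and its associated max-throughput-by-deadline problem — so I do not anticipate a genuine obstacle. The only point requiring a little care is the well-definedness of the minima (ensuring the relevant sets are nonempty and, for $O(x)$, that the minimum is attained rather than merely an infimum); this follows because valid schedules with completion time at most $t$ form a finite set for each $t$ (bounded by the $u_0$-per-slot and distinct-peer constraints), so $F$ takes integer values and the minimal $t$ with $F(t) \geq x$ is attained, and the schedule attaining $F(t^\star)$ attains the minimum in the definition of $O(x)$. A remark that $F$ is nondecreasing makes the argument cleanest, though strictly it is not even needed for the two inequalities above.
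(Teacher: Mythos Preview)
Your proof is correct and essentially identical to the paper's: both arguments set $t^\star=\min\{t:F(t)\geq x\}$ and establish the two inequalities $O(x)\leq t^\star$ and $O(x)\geq t^\star$ by extracting witness schedules from the definitions of $F(t^\star)$ and $O(x)$ respectively. Your additional remarks on monotonicity of $F$ and well-definedness of the minima are not in the paper's proof, but they are reasonable side comments and do no harm.
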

\begin{proof}
	Let $t_1=O(x)$ and $t_2=\min\{ t: F(t) \geq x\}$. 
	\begin{itemize}
		\item $t_1 \geq t_2$. By Eq. \ref{eq:mintime}, an $S_1 \in \mathcal S$ exists such that $C(S_1)=t_1$ and $|S_1| \geq x$, implying that $F(t_1) \geq x$. Therefore, $t_1 \geq \min\{t: F(t) \geq x \}=t_2$.
		
		\item $t_1 \leq t_2$. By Eq. \ref{eq:maxfrag}, an $S_2$ exists such that $C(S_2)=t_2$ and $|S_2| \geq x$. This directly implies that $t_1 = O(x) \leq t_2$. 
	\end{itemize}
\end{proof}

We can now iteratively compute $F(t)$ with growing values of $t$; the
above Proposition guarantees that the first value $T$ that satisfies
$F(T) \geq x$ will be the desired result.

We now focus on a single instance of the problem of finding the
maximum number of fragments $F(T)$ that can be transferred within
time-slot $T$, and show that it can be encoded as a max-flow problem
on a flow network built as follows.  First, we create a bipartite
directed graph $G'=(V',E')$ where $V'=\mathcal{T} \cup \mathcal{P}$;
the elements of $\mathcal{T}=\left\{t_i : i \in 1\ldots T\right\}$
represent time-slots, the elements of $\mathcal{P}=\left\{p_i : i \in
1 \ldots I\right\}$ represent remote peers (only storage nodes for
restores). An edge connects a time-slot to a peer if that peer is
online during that particular time-slot: $E'
= \left\{\left(t_i,p_j\right): t_i \in \mathcal T \land
p_j \in \mathcal P \land a_{i,j}=1\right\}$. Source $s$ and sink $t$
nodes complete the bipartite graph $G'$ and create a flow network
$G=(V,E)$. The source is connected to all the time-slots during which
the data owner $p_0$ is online; all peers are connected to the sink.

The capacities on the edges are defined as follows: edges from the source to time-slots have capacity $u_0$ or $d_0$ (respectively, for backup and restore operations); edges between time-slots and peers have capacity $d_i$ or $u_i$ (respectively, for backup and restore operations); finally, edges between peers and the sink have capacity $m$. Note that in this work we assume individual fragments to be uploaded to distinct peers, hence $m=1$. To simplify presentation, we assume integer capacities $u_k=d_k=1\, \forall k \in [0,I]$.

Fig.~\ref{fig:maxflow} illustrates an example of the whole procedure
described above, for the case of a backup
operation. Fig.~\ref{fig:example_time} shows the online behavior for
time-slots $t_1, \ldots, t_8$ of the data owner and the remote peers
($p_1, p_2, p_3$) that can be selected as remote locations to backup
data. The optimal schedule problem amounts to deciding which remote
peer should be awarded a time-slot to transfer backup fragments, so
that the operation can be completed within the shortest time. This
problem is encoded in the graph of
Fig.~\ref{fig:example_maxflow}. Time-slots and remote peers are
represented by the nodes of the inner bipartite graph. An edge of
capacity 1 connects a time-slot to the set of online peers in that
time-slot, as derived from Fig.~\ref{fig:example_time}. The source
node has an edge of capacity $u_0=1$ to every time-slot in which the
data owner is online (in the figure, $t_4,t_5$ are shaded to remind
$p_0$ is offline): this guarantees that only 1 fragment per time-slot
can be transferred. The sink node has an incident edge with capacity
$m=1$ from every remote peer.

\begin{figure}
\centering
	
				\subfigure[Online behavior]{ \label{fig:example_time} 
				\includegraphics[scale=0.4]{./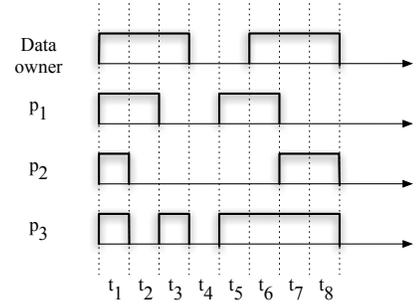}} 
				\subfigure[Equivalent flow network]{ \label{fig:example_maxflow} 
				\includegraphics[scale=0.4]{./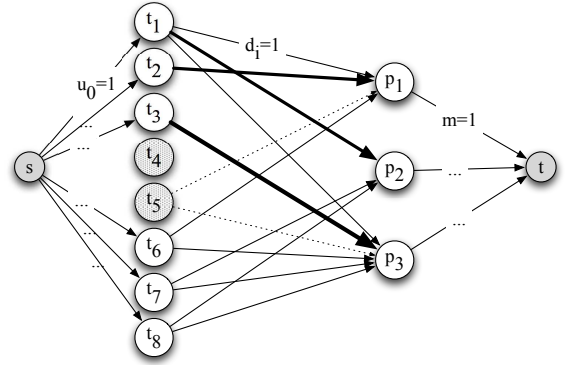}} 
	
	\caption{An example of a backup operation. The original problem of finding an optimal schedule, given the online behavior of peers, is transformed in a max-flow problem on an equivalent graph.} \label{fig:maxflow} 
\end{figure}

Each $s-t$ flow represents a valid schedule. For backup operations,
the schedule is valid because the three equations of
Def. \ref{def:backup_schedule} are verified by construction of the
flow network. Similarly, for restore operations the equations of
Def. \ref{def:restore_schedule} are satisfied by construction, since
only remote peers in the set $P$ are part of the flow network.

In the particular case of the example the smallest value of $t$
ensuring $F(t)\geq 3$ is 3, corresponding to a flow graph that
contains only the $t_1, t_2, t_3$ time-slot nodes. The resulting
optimal scheduling corresponds to the thick edges in
Fig.~\ref{fig:example_maxflow}.

For a flow network with $V$ nodes and $E$ edges, the max-flow can be computed with time complexity $O\left(VE \log\left(\frac{V^2}{E}\right)\right)$ \cite{12144}. In our case, when we have $p$ nodes and an optimal solution of $t$ time-slots, $V$ is $O(p+t)$ and $E$ is $O(pt)$. The complexity of an instance of the algorithm is thus $O\left(pt\left(p \log \frac p t + t \log \frac t p\right)\right)$.

The original problem, i.e., finding an optimal schedule that minimizes
the time to transfer $x$ fragments, can be solved by performing
$O(\log t)$ max-flow computations. In fact, an upper bound for the
optimal completion time can be found in $O(\log t)$ instances of the
max-flow algorithm by doubling at each time the value of $T$, then the
optimal value can be obtained, again in $O(\log t)$ time, by using
binary search. The computational complexity of determining an optimal
schedule in a full information framework is thus $O\left(pt\log
t\left(p \log \frac p t + t \log \frac t p\right)\right)$.

\subsection{Random Scheduling}

In practice, assuming complete knowledge of peers' online behavior is
not realistic. We introduce a \emph{randomized scheduling policy}
which only requires knowing which peers are online at the time of the
scheduling decision. In Sec. \ref{sec:simple_sim}, we compare
optimal and randomized scheduling using real traces.

For backup operations, in each time-slot, fragments are uploaded from
the data owner to no more than $u_0$ remote peers chosen at random
among those that are currently online and that did not receive a
fragment in previous time-slots. This satisfies
Def.~\ref{def:backup_schedule}. For restore operations, in each
time-slot, $d_0$ remote peers in the set $P$ are randomly chosen among
those that are currently online and data is transferred back to the
data owner. This satisfies Def.~\ref{def:restore_schedule}.

We now use Fig.~\ref{fig:maxflow} to illustrate a possible outcome of
the randomized schedule defined here and compare it to the optimal
schedule computed using the max-flow formalization. We focus on the
backup operation of $x=3$ fragments carried out by the data owner
$p_0$. In Fig.~\ref{fig:example_time}, the data owner may randomly
select $p_1$ to be the recipient of the first fragment in time-slot
$t_1$. Since we assume $m=1$ fragment can be stored on a distinct
peer, this choice implies that time-slot $t_2$ is ``wasted''. In
time-slot $t_3$ the data owner has no choice but to store data on peer
$p_3$. Only in time-slot $t_7$ the backup process is complete, when
the last fragment is uploaded to peer $p_2$. Hence, this randomized
schedule writes as $(p_1,t_1);(p_3,t_3);(p_2,t_7)$.

The optimal schedule is obtained by computing the max-flow on the flow network in Fig.~\ref{fig:example_maxflow} (thick edges in the figure), and writes as $(p_2,t_1);(p_1,t_2);(p_3,t_3)$. The backup operation only requires 3 time-slots to complete.

\subsection{Numerical Analysis} \label{sec:simple_sim}

Here, we take a numerical perspective and compare optimal
and randomized scheduling in terms of TTB and TTR. We focus on a
single data owner $p_0$ involved in a backup operation.  The input to
the scheduling problem is the availability trace described in
Sec. \ref{sec:availability_trace}, starting the backup at a random
moment; we set the duration of a time-slot to one hour. Let $u_0=1$
fragment per time-slot be the upload rate of $p_0$. We report results
for $x\in\{40, 60, 80\}$ backup fragments, and vary the number of
randomly chosen remote peers so that $I \in
\{1.1x, 1.2x, \ldots, 2x\}$. We obtained each data point by averaging
1,000 runs of the experiment; furthermore, for each of those runs, we
averaged the completion times of 1,000 random schedules in the same
settings.

\begin{figure}
	\centering
	\includegraphics[width=.8\columnwidth]{./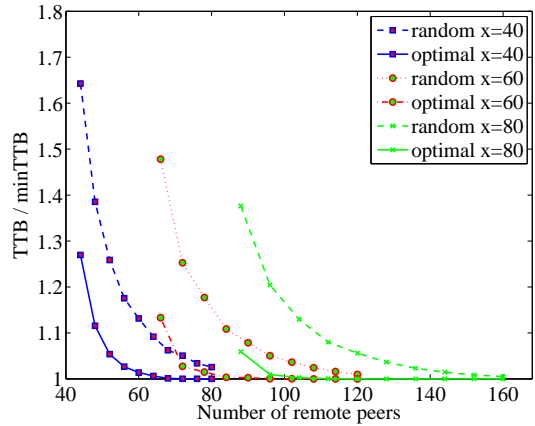} 
	\caption{Numerical analysis: a comparison between optimal and randomized scheduling, using real availability traces.} \label{fig:ineff} 
\end{figure}

Fig.~\ref{fig:ineff} illustrates the ratio between the TTB achieved
respectively by optimal and randomized scheduling, normalized to the
ideal backup time minTTB. We observe that both optimal and randomized
scheduling approach minTTB when the number of remote peers available
to store backup fragments increases: a large system improves
transmission opportunities, and TTB approaches the ideal lower
bound. However, when the number of backup fragments grows, which is a
consequence of higher redundancy rates, randomized scheduling requires
a larger pool of remote machines to approach the performance of the
optimal scheduling. We also note that heterogeneous and correlated
behavior of users in the availability trace results in ``idle''
time-slots in which neither optimal nor randomized scheduling can
transfer data.

This very same evaluation can be used to evaluate a restore operation,
even if the parameters acquire a different meaning. In this case, the
number $x$ of fragments that need to be transferred is the number of
original fragments $k$, and the number of remote peers $I$ will
correspond to the number of encoded fragments $n$. For restores, as
the redundancy rate $\frac n k = \frac I x$ grows, backups will be
more efficient.

We conclude that randomized scheduling is a good choice for a P2P
backup application, provided that:
\begin{itemize}
\item to have efficient backups, the ratio between number of nodes
      in the system and number of fragments to store is not very close
      to one;
\item to have efficient restores, the redundancy rate is not very close to one.
\end{itemize}

As a heuristic threshold, in our analysis we obtain that a value of
$\frac I x=1.5$ is sufficient to complete backup and restore within a
tolerable (around 10\%) deviation from minTTB or minTTR,
respectively. In the following, we will therefore use randomized
scheduling and make sure that such a ratio is reached in order to
ensure that scheduling does not impose a too harsh penalty on TTB and
TTR.

Birk and Kol~\cite{1339108} analyzed random backup scheduling by
modeling peer uptime as a Markovian process. Albeit quantitatively
different due to the absence of diurnal and weekly patterns in their
model, their study reached a conclusion that is analogous to ours: in
backups, the completion time of random scheduling converges to to the
optimal value as the system size grows.

\section{Redundancy Policy}
\label{sec:red}

In the literature, the redundancy rate is generally chosen \emph{a priori} to ensure what we term \emph{prompt data availability}. Given a system with average availability $a$, a target data availability $t$, and assuming the availability of each individual peer as an independent random variable with probability $a$, a \emph{system-wide} redundancy rate is computed as follows. The total number $n$ of redundant fragments required to meet the target $t$, when $k$ original fragments constitute the data to backup is computed as \cite{bhagwan2003availability}:
\begin{equation}
\min \left\{ n \in \mathbb N :
\sum_{i=k}^n {n \choose i} a^i (1 - a)^{n-i}\geq t
\right\}
.
\label{eq:binomial}
\end{equation}
We label this method \emph{fixed-redundancy}, and use it in the following as a baseline approach.

Ensuring prompt data availability is not our goal, since peers only retrieve their data upon (hopefully rare) crash events. Data downloads correspond to restore operations, which require a long time to complete because of the sheer size of backup data.
Hence, we approach the design of our redundancy policy by taking into account the tradeoffs that a backup application has to face. On the one hand, low redundancy improves the aggregate storage capacity of the system, TTB decreases, and maintenance costs drop. On the other hand, two factors discourage from selecting excessively low redundancy rates. First, TTR increases, as less peers will be online to serve fragments during data restores; second, there is a higher risk of data loss.

Our redundancy policy operates as follows. During the backup phase, peers constantly estimate their TTR and the probability of losing data and \emph{adjust} the redundancy rate according to the characteristics of the remote peers that hold their data. In practice, data owners upload encoded fragments until the estimates of TTR and data loss probability are below an arbitrary threshold. When the threshold is crossed, the backup phase terminates.

Note that TTB is generally several times longer than TTR. First, in the restore phase, peers are not likely to disconnect from the Internet. Second, most peers have asymmetric lines with fast downlink and slow uplink; third, backups require uploading redundant data while restores involve downloading an amount of data equivalent to the original backup object. Because of this unbalance, we argue that it is reasonable to use a redundancy scheme that trades longer TTR (which affects only users that suffer a crash) for shorter TTB (which affects all users).

We now delve into the details of how to approximate TTR and data loss probability.

\subsection{Approximating TTR}

Similarly to the optimal scheduling problem, predicting accurately the TTR requires full knowledge of disk failure events and peer availability patterns. We obtain an estimate of the TTR with a heuristic approach; in Sec.~\ref{sec:complex_sim} we show that our approximation is reasonable.

We assume that a data owner $p_0$ remains online during the whole restore process. The TTR can be bounded for two reasons: \emph{i)} the download bandwidth $d_0$ of the data owner is a bottleneck; \emph{ii)} the upload rate of remote peers holding $p_0$'s data is a bottleneck.
Let us focus on the second case: we define the \emph{expected upload rate} of a generic remote peer $p_i$ holding a backup fragment of $p_0$ as the product $a_i u_i$ of the average availability and the upload bandwidth of $p_i$. 
The data owner needs $k$ fragments to recover the backup object: suppose these fragments are served by the $k$ ``fastest'' remote peers. In this case, the ``bottleneck'' upload rate is that of the $k$-th peer $p_j$ with the smaller expected upload rate.
If we consider $l$ parallel downloads and a backup object of size $o$, a peer computes an estimate of the TTR as

\begin{equation}
eTTR = \max \left(\frac o {d_0}, \frac o {l a_j u_j} \right).
\label{eq:eTTR}
\end{equation}

\subsection{Approximating the Data Loss Probability}
\label{sec:dataloss}

Upon a crash, a peer with $n$ fragments placed on remote peers can lose its data if more than $n - k$ of them crash as well before data is completely restored. Considering a delay $w$ that can pass between the crash event and the beginning of the restore phase, we compute the data loss probability within a total delay of $t = w + eTTR$.

We consider disk crashes to be memoryless events, with constant probability for any peer and at any time. Disk lifetimes are thus exponentially distributed stochastic variables with a parametric average $\overline t$: a peer crashes by time $t$ with probability $1 - e^{-t/\overline t}$. The probability of data loss is then
\begin{equation}
\sum_{i=n-k+1}^{n}{n \choose
  i}\left(1-e^{-t/\overline{t}}\right)^{i}\left(e^{-t/\overline{t}}\right)^{n-i}.
\label{eq:dataloss}
\end{equation}

\begin{figure}
\centering
\includegraphics[width=.32\textwidth]{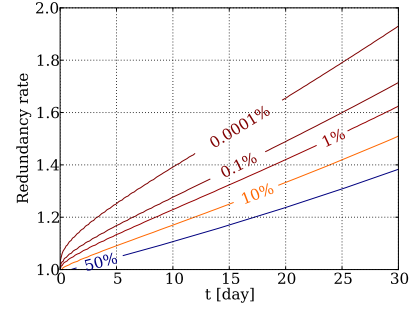}
\caption{Data loss probability.}
\label{fig:dataloss}
\end{figure}

Data loss probability needs to be monitored with great care. In Fig.~\ref{fig:dataloss}, we plot the probability of losing data as a function of the redundancy rate and the delay $t$. Here we set $\overline t=90~\mathrm{days}$ and $k=64$; when the time without maintenance is in the order of magnitude of weeks, even a small decrease in redundancy can increase the probability of data loss by several orders of magnitude.

In summary, our redundancy policy triggers the end of the backup phase, and determines the redundancy rate applied to a backup object. Since we trade longer TTR for shorter TTB, our scheme ensures that data redundancy is enough to make data loss probability small, and keeps TTR under a certain value.
Finally, we remark that our approximation techniques require knowing the uplink capacity and the average availability of remote peers. While a decentralized approach to resource monitoring is an appealing research subject, it is common practice (e.g. Wuala) to rely on a centralized infrastructure to monitor peer resources.

\section{System Simulation}
\label{sec:complex_sim}

We proceed with a trace-driven system simulation, considering all the factors identified in Sec. \ref{sec:scheduling}: churn, correlated uptime, peer bandwidth, congestion, and fragment granularity.

\subsection{Simulation Settings}

Our simulation covers three months, using the availability
traces described in Sec. \ref{sec:availability_trace}, with the
exception that peers remain online during restores. Uplink capacities
of peers are obtained by sampling a real bandwidth distribution
measured at more than 300,000 unique Internet hosts for a 48 hour
period from roughly 3,500 distinct ASes across 160
countries \cite{Piatek07doincentives}. These values have a highly
skewed distribution, with a median of 77 kBps and a mean of
428kBps. To represent typical asymmetric residential Internet lines,
we assign to each peer a downlink speed equal to four times its
uplink.

Our adaptive redundancy policy uses the following parameters: we set the threshold for the estimated TTR to satisfy $eTTR \leq \max\left(1~\mathrm{day}, 2 \cdot minTTR\right)$ and we keep the probability of data loss smaller than $10^{-4}$, when $w=2~\mathrm{weeks}$ is the maximum delay between crash and restore events (see Sec.~\ref{sec:dataloss}).

Each node has 10 GB of data to backup, and dedicates 50 GB of storage space
to the application. The high ratio between these two values
lets us disregard issues due to insufficient storage capacity (which
is considered to be cheap) and focus on the subjects of our
investigation, i.e., scheduling and redundancy. The fragment size $f$
is set to 160 MB, resulting in $k=64$ original fragments per backup
object.

We define peers' lifetimes\footnote{Here we neglect the economics of the application, e.g. promoting user loyalty to the system. Hence, we do not consider unanticipated user departures.} to be exponentially distributed random variables with an expected value of 90 days. After they crash, peers return online immediately and start their restore process; in Sec.~\ref{sec:repair}, we also consider a delay between crash events and restore operations.

As discussed in Sec.~\ref{sec:dataloss}, we compare against a baseline redundancy policy that assigns a fixed redundancy rate. Here we set a target data availability of $t=0.99$, and use the system-wide average availability $a=0.36$ as computed from our availability traces. Hence, we obtain a value $n=228$ and a redundancy rate $n / k = 3.56$. 

Our simulations involve 376 peers. This is sufficient to ensure that the performance of a randomized scheduling is close to optimality (see Sec.~\ref{sec:scheduling}).

For each set of parameters, the simulation results are obtained by
averaging ten simulation runs.

\subsection{Results}

\begin{figure}
\centering

\includegraphics[width=.25\textwidth]{./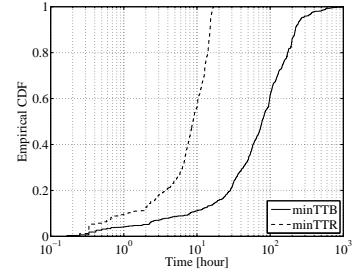}
\caption{MinTTB and minTTR.}
\label{fig:minTTR}
\end{figure}

Fig.~\ref{fig:minTTR} shows the cumulative distribution function of minTTB and minTTR: these baseline values are deeply influenced by the bandwidth distribution we used, and their gap is justified by the asymmetry of the access bandwidth and the assumption that peers stay online during the restore process.

We now verify the accuracy of our approximation of TTR, expressed as the ratio of estimated versus measured TTR. This ratio has a median of $0.92$, with 10th and 90th percentiles of respectively $0.50$ and $2.56$. The values of TTRs vary mostly due to the diurnal and weekly connectivity patterns of users in our traces, but for most cases the eTTR is a sensible rough estimation of TTR.

\begin{figure*}
\centering
	\subfigure[]{
                \includegraphics[width=.3\textwidth]{./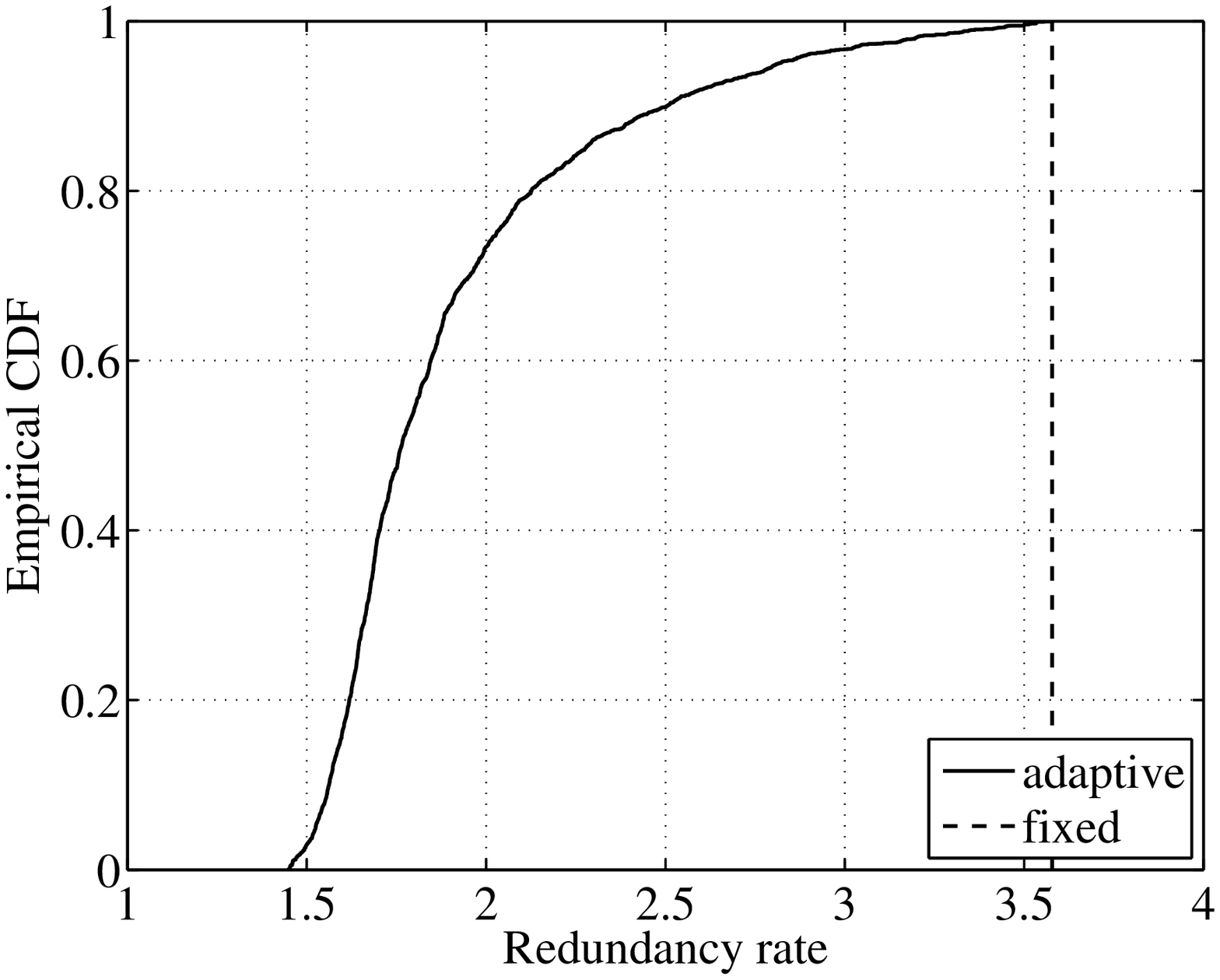}
                \label{fig:redundancy}
        }
	\subfigure[]{
                \includegraphics[width=.3\textwidth]{./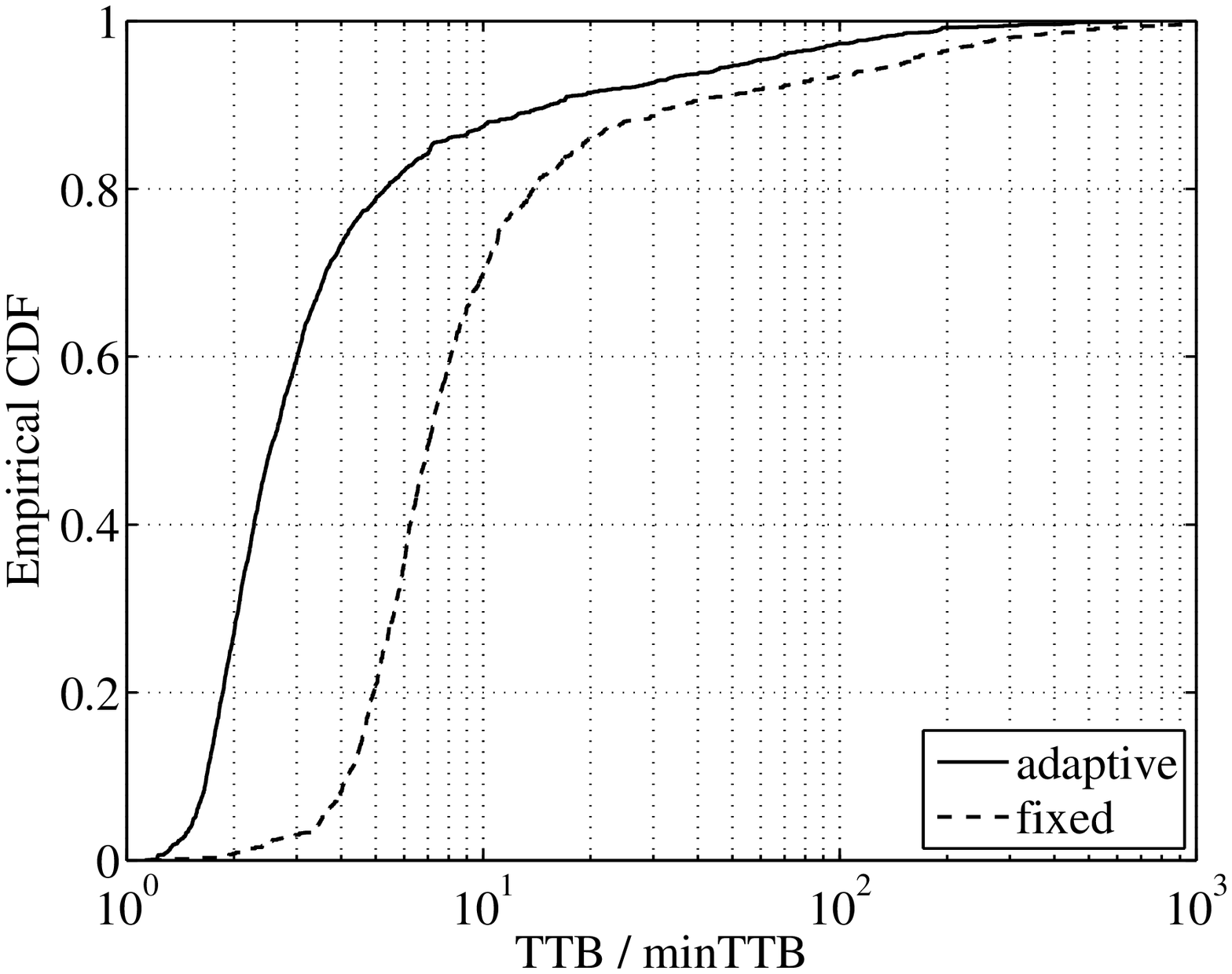}
                \label{fig:ttb}
        }	
	\subfigure[]{
                \includegraphics[width=.3\textwidth]{./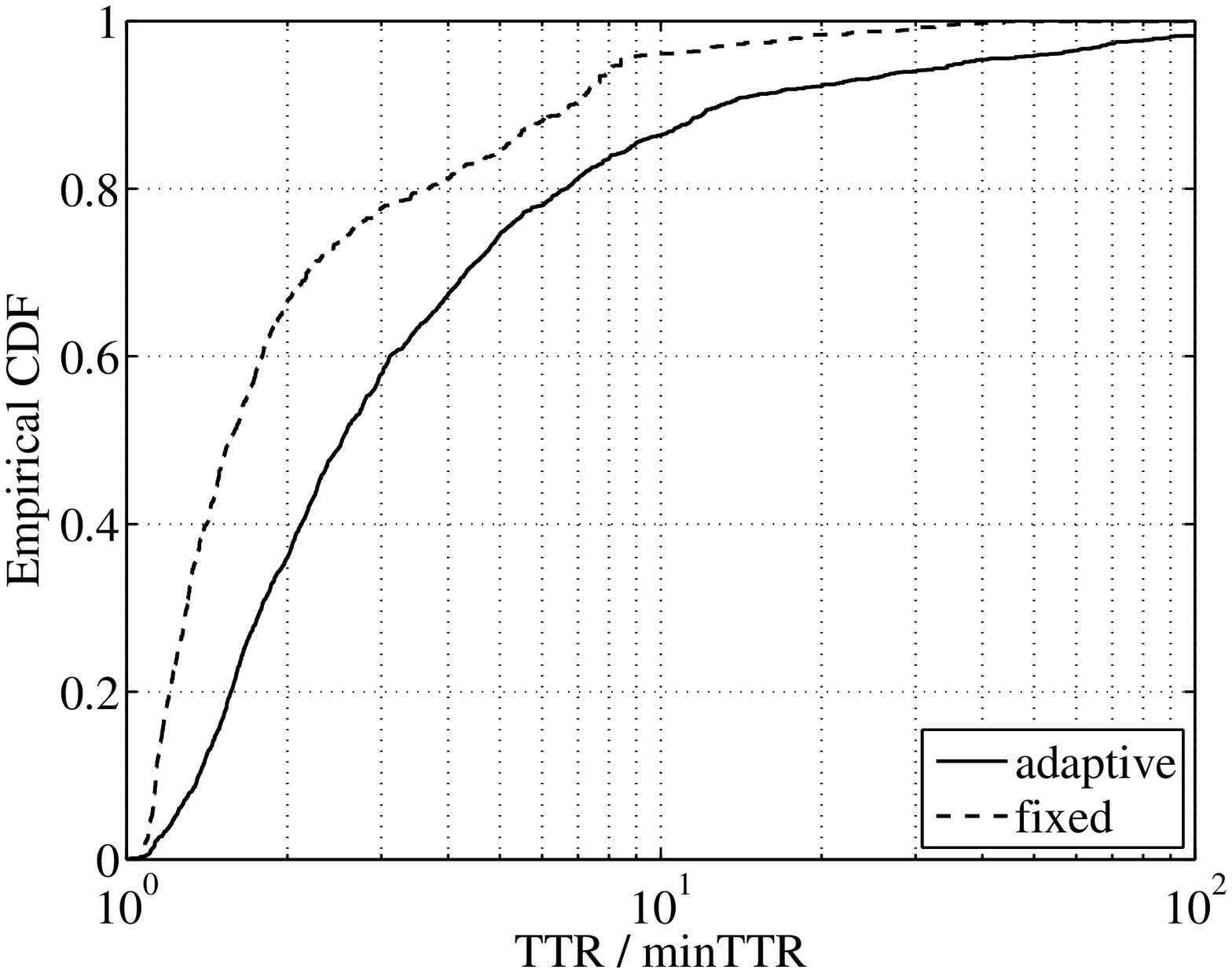}
                \label{fig:ttr}
       }		
\caption{System performance.}
\label{fig:eTTRbased2}
\end{figure*}

The adaptive policy pays off, with an average redundancy rate of
$1.91$ against a flat value of $3.56$ for the baseline approach
(Fig. \ref{fig:redundancy}); the maintenance traffic decreases
accordingly, and the system almost doubles its storage capacity. In
addition, TTB is roughly halved with the adaptive scheme
(Fig. \ref{fig:ttb}); a price for this is paid by crashed peers, which
will have longer TTR (Fig. \ref{fig:ttr}). As we argued in
Sec. \ref{sec:red}, we think this loss is tolerable and well offset by
the benefits of reduced redundancy. We observe
tails where a minority of the nodes have very high $TTB/minTTB$ and
$TTR/minTTR$ ratios. They are nodes with very high bandwidths and
therefore low values of minTTB and minTTR (see Fig.~\ref{fig:minTTR});
their backup and restore speeds will be limited by the bandwidth of
remote nodes which are orders of magnitude smaller.

These results certify that our adaptive scheme beneficially affects performance. However, lower redundancy might result in higher risks of losing data: in the following Section, we analyze this.

\section{Data Loss and Delayed Restore}
\label{sec:repair}

Our simulation settings put the system under exceptional stress: the peer crash rate is two orders of magnitude higher than what is reported for commodity hardware \cite{fast07}. In such an adverse scenario, we study the likelihood and the causes of data loss, and their relation to the redundancy scheme. 

In addition, we discuss the implications of delayed response to crashes, affecting both restore and maintenance operations. We consider the following scenarios:

\begin{itemize}
\item \emph{Immediate response:} Peers start restores as soon as they crash. Moreover, they immediately alert relevant peers to start their maintenance.
\item \emph{Delayed response:} Crashed peers return online after a
      random delay. If this delay exceeds a timeout, peers suffering from fragment loss start their maintenance.
\item \emph{Delayed assisted response:} After the above timeout, a third party intervenes to rescue crashed peers whose data is at risk, by maintaining it.
\end{itemize}

In our simulations, delays are exponentially distributed random variables with an average of one week; the timeout value is one week as well.

For performance reasons, assisted maintenance can be supported by an online storage provider, which is used as a \emph{temporary} buffer. Here we assume a provider with 100\% uptime, unlimited bandwidth and storage space: maintenance is triggered upon expiration of the timeout, conditioned to a data loss probability greater than $10^{-4}$.

\begin{table}
\centering

\caption{Categorization of data loss}
\label{tab:dataloss}
\begin{tabular}{|l|l|r|r|}
\hline
Red. policy & Restores & Unfinished Backups & Unavoidable \\
\hline
\hline

& Immediate & 96\% & 76\% \\

\cline{2-4}

Adaptive & Delayed & 79\% & 65\% \\

\cline{2-4}

& Delayed assisted & 94\% & 78\% \\

\hline

& Immediate & 99\% & 78\% \\

\cline{2-4}

Fixed & Delayed & 92\% & 75\% \\

\cline{2-4}

& Delayed assisted & 94\% & 76\% \\

\hline
\end{tabular}
\end{table}

In our experiments, due to the inflated peer crash rates, between
11.4\% and 14.6\% of crashed peers could not recover their data. In
Table~\ref{tab:dataloss}, we focus on those peers. The majority of
data loss events affected peers that crashed before they completed
their backups, according to the redundancy policy (unfinished backups
column). This can be due to two reasons: the backup process is
inherently time-consuming, due to the availability and bandwidth of
data owners; or the backup system is inefficient.

To differentiate between these two cases, we
consider \emph{unavoidable} data loss events (rightmost column in the
table). If a peer crashes before minTTB, no online backup system could
have saved the data. Data backup takes time: this simple fact alone
accounts for far more than all the limitations of a P2P
approach. Users should worry more about completing their backup
quickly than about the reliability of their peers.

The difference in redundancy between the high rate used by the fixed
baseline and the adaptive approach does not impact significantly
the data loss rate, excepting the case of non-assisted delayed
response. Assisted maintenance is an effective way to counter
this effect.

In Fig.~\ref{fig:outsource} we show the costs of assisted repairs in
terms of data traffic. Given that prices on storage service are highly
asymmetric\footnote{To date (July 2010), inbound traffic to Amazon S3
is free: \url{http://aws.amazon.com/s3/#pricing}.} we only consider
the outbound traffic, from provider to peers.  Data volumes are
expressed as fractions of the total size of backup objects in the
system. There is a striking difference between the adaptive and fixed
redundancy schemes: higher redundancy results in less emergency
situations in which the server has to step in.  The amount of data
stored on the server has a peak load of less than 2.5\% of the total
backup size: the assisted repairs are quick, therefore only a small
fraction of the peers need assistance simultaneously.

\begin{figure}
\centering	
	\includegraphics[width=.25\textwidth]{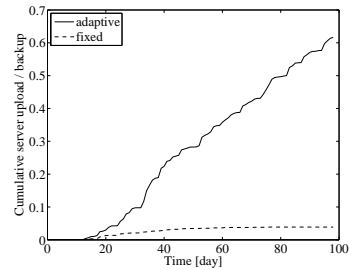}	
\caption{Assisted maintenance.}
\label{fig:outsource}
\end{figure}

\section{Related Work}
\label{sec:related}

Redundancy rates and data repair techniques in P2P backup systems have
been investigated from various angles. Various
works \cite{Chun06efficientreplica, 1774229} determine redundancy as a
function of node failure rate in order to guarantee data durability at
the expense of data availability. Many other approaches
(e.g., \cite{Kubiatowicz00oceanstore, Sameh03acooperative,
Kiran04totalrecall}) adopt formulae similar to
Equation \ref{eq:binomial} to guarantee low latency through prompt
data availability, but require high redundancy rates in typical
settings. Our proposal strives to provide \emph{both} durability and
performance at a low redundancy cost, relaxing prompt data
availability by requiring that data becomes recoverable within a given
time window.

A complete system design requires considering several problems that
were not addressed in this paper; fortunately, many of them have been
tackled in the literature.

When a full system needs to be backed up, \emph{convergent encryption}
\cite{cox02pastiche, 1021936} can be used to ensure that storage space
does not get wasted by saving multiple copies of the same file across
the system.

Data maintenance is cheap in our scenario, where it is performed by a
data owner with a local copy. When maintenance is delegated to nodes
that do not have a local copy of the backup objects, various coding
schemes can be used \cite{dimakis07,duminuco-biersack-08} to
limit the amount of required data transit. For these settings,
cryptographic protocols \cite{oualha2008security,
ateniese2008scalable} have been designed to verify the authenticity of
stored data.

A recurrent problem for P2P applications is
creating incentives to encourage nodes in contributing more
resources. This can be done via reputation
systems \cite{kamvar2003eigentrust} or virtual
currency \cite{vishnumurthy2003karma}. Specifically for storage
systems, an easy and efficient solution is segregating nodes in
sub-networks with roughly homogeneous characteristics such as uptime
and storage space \cite{Pamies-juarez_rewardingstability,
EURECOM+2738}.

Backup objects, whose confidentiality can be ensured by standard
encryption techniques, should encode incremental differences between
archive versions. Recently, various techniques have been proposed to
optimize computational time and size of these
differences \cite{Tangwongsan_efficientsimilarity}.

It may happen that resources offered by peers are just not sufficient
to satisfy all user needs. In this case, a hybrid peer-assisted system
can be developed where data is stored on a centralized data center and
on peers. This can result in scenarios having performances comparable
with centralized systems, at a fraction of the
costs \cite{EURECOM+3140}.

\section{Conclusion}

The P2P paradigm applied to backup applications is a compelling
alternative to centralized online solutions, which become costly
for long-term storage.

In this work, we revisited P2P backup and argued that such an
application is viable. Because the online behavior of users is
unpredictable and, at large scale, crashes and failures are the norm
rather than the exception, we showed that scheduling and redundancy
policies are paramount to achieve short backup and restore times.

We gave a novel formalization of optimal scheduling and showed that,
with full information, a problem that may appear combinatorial in
nature can actually be solved efficiently by reducing it to a maximal
flow problem. Without full information, optimal scheduling is
unfeasible; however, we showed that as the system size grows,
the gap between randomized and optimal scheduling policies diminishes rapidly. 

Furthermore, we studied an adaptive scheme that strives to maintain
data redundancy small, which implies shorter backup times than
a state-of-the-art approach that uses a system-wide, fixed redundancy
rate. This comes at the expense of increased restore times, which we
argued to be a reasonable price to pay, especially in light of our
study on the probability of data loss. In fact, we determined that the
vast majority of data loss episodes are due to incomplete backups. 
Our experiments illustrated that such events are unavoidable, as they
are determined by the limitations of data owners alone: no online
storage system could have avoided such unfortunate events.
We conclude that short backup times are crucial, far more than the
reliability of the P2P system itself. As such, the crux of a P2P
backup application is to design mechanisms that optimize such metric.

Our research agenda includes the design and implementation of a fully
fledged prototype of a P2P backup application. Additionally, we will
extend the parameter space of our study, to include the natural
heterogeneity of user demand in terms of storage requirements. To do
so, we will collect measurements from both existing online storage
systems and from a controlled deployment of our prototype
implementation.

\bibliographystyle{IEEEtran}
\bibliography{infocom}

\end{document}